\def\C {\ensuremath{\mathbb{C}}}
\def\K {\ensuremath{\mathbf{k}}}
\def\Q {\ensuremath{\mathbb{Q}}}
\def\R {\ensuremath{\mathbb{R}}}
\def\KK {\ensuremath{\mathbf{K}}}
\newcommand{\xx}{\mathbf{x}}
\def\x {\ensuremath{\mathbf{x}}}
\newtheorem{Theorem}{Theorem}
\newtheorem{Example}{Example}
\newcommand{\lc}[1]{\mbox{{\rm lc}$(#1)$}}
\newcommand{\init}[1]{\mbox{{\rm init}$(#1)$}}
\newcommand{\mdeg}[1]{\mbox{{\rm mdeg}$(#1)$}}
\newcommand{\mvar}[1]{\mbox{{\rm mvar}$(#1)$}}
\newcommand{\prem}[1]{\mbox{{\rm prem}$(#1)$}}
\newcommand{\pquo}[1]{\mbox{{\rm pquo}$(#1)$}}
\newcommand{\res}[1]{\mbox{{\rm res}$(#1)$}}
\newcommand{\tail}[1]{\mbox{{\rm tail}$(#1)$}}
\newcommand{\coeff}[1]{\mbox{{\rm coeff}$(#1)$}}
\renewcommand{\min}[1]{\mbox{{\rm min}$(#1)$}}
\newcommand{\RegularizeInitial}[1]{\mbox{{\sf MakeLeadingCoefficientInvertible$_n$}$(#1)$}}
\newcommand{\RegularGcd}[1]{\mbox{{\sf Gcd$_n$}$(#1)$}}
\newcommand{\Squarefree}[1]{\mbox{{\sf Squarefree$_n$}$(#1)$}}
\newcommand{\SquarefreePart}[1]{\mbox{{\sf SquarefreePart}$(#1)$}}
\newcommand{\Maple}{{\sc  Maple}}
\newcommand{\RegularChains}{{\tt  Regu\-lar\-Chains}}
\def\x {\ensuremath{\mathbf{x}}}
\def\DD {\ensuremath{\mathcal{D}}}
\def\TT {\ensuremath{T}}
\newcommand{\der}[1]{\mbox{{\rm der}$(#1)$}}
\newcommand{\discrim}[1]{\mbox{{\rm discrim}$(#1)$}}
\newcommand{\level}[1]{\mbox{{\rm level}$(#1)$}}
\newcommand{\Project}[1]{\mbox{{\sf ExtractProjection}$(#1)$}}
\newcommand{\CylindricalDecompose}[1]{\mbox{{\sf CylindricalDecompose}$(#1)$}}
\newcommand{\CoFactor}[1]{\mbox{{\sf CoFactor}$(#1)$}}
\newcommand{\NextPathToDo}[1]{\mbox{{\sf NextPathToDo$_{n-1}$}$(#1)$}}
\newcommand{\Update}[1]{\mbox{{\sf UpdatePath}$(#1)$}}
\newif\ifcomment
\newcommand{\CCAD}{{PCAD}}
\newcommand{\TCAD}{{\sc TCAD}}
\newcommand{\IntersectPath}[1]{\mbox{{\sf IntersectPath}$(#1)$}}
\newcommand{\CAD}{{CAD}}
\newcommand{\QE}{{QE}}
\newcommand{\QEPCAD}{{\sc Qepcad}}
\begin{document}

\begin{center}
  {\Large\bf 
    An Incremental Algorithm for Computing Cylindrical Algebraic Decompositions
  }
\mbox{}\\[11pt]
{\large
  Changbo Chen, Marc Moreno Maza
}
\mbox{}\\[5pt] 
ORCCA, University of Western Ontario (UWO) \\
London, Ontario, Canada \\
{\tt \{cchen252,moreno\}@csd.uwo.ca}\\[5pt]
\end{center} 

\begin{abstract}
In this paper, we propose 
an incremental algorithm for 
computing cylindrical algebraic decompositions.
The algorithm consists of two parts: 
computing a complex cylindrical tree 
and refining this complex tree into a 
cylindrical tree in real space. 
The incrementality comes from 
the first part of the algorithm, 
where a complex cylindrical tree
is constructed by refining a previous complex cylindrical tree
with a polynomial constraint. 
We have implemented our algorithm in Maple. 
The experimentation shows that 
the proposed  algorithm outperforms existing ones 
for many examples taken from the literature.
\end{abstract}

\section{Introduction}

Cylindrical algebraic decomposition (CAD)
is a fundamental tool in real algebraic geometry. 
It was invented by G.E. Collins in 1973~\cite{col75}
for solving real quantifier elimination (QE) problems.
In the last forty years, following Collins' original 
projection-lifting scheme, 
many enhancements have been performed in order to 
ameliorate the efficiency of CAD construction, 
including adjacency and clustering techniques~\cite{Arnon84b},
improved projection methods~\cite{scott88,hong90,CavinesJohnson98,brown01}, 
partially built {\CAD}s~\cite{ch91,scott93,adam00},
improved stack construction~\cite{Collins02},
efficient projection orders~\cite{Dolzmann04}, 
making use of equational 
constraints~\cite{Collins98, McCallum2001, Brown05, McCallum2009}, 
and so on.
Moreover, CADs can be computed 
by several software packages, such as 
{\sc Qepcad}~\cite{QEPCAD, Bro03}, {\sf Mathematica}~\cite{adam00, adam06}, 
{\sf Redlog}~\cite{Dolzmann96} and {\sf SyNRAC}~\cite{Iwane09}.

In~\cite{CMXY09}, together with B. Xia and L. Yang,
we presented a different way for computing CADs 
based on triangular decomposition of polynomial systems.
In that paper, we introduced the concept of 
cylindrical decomposition of the complex space (CCD),  
from which a CAD can be easily derived.
The concept of CCD is reviewed in Section~\ref{sec:complex}.
In the rest of the present paper, we use {\TCAD}
to denote CAD based on triangular decompositions
while {\CCAD} refers to CAD based on Collins' projection-lifting scheme.

The CCD part of {\TCAD} can be seen as an enhanced 
projection phase of {\CCAD}. 
However, w.r.t. {\CCAD} (especially when the projection operator 
is using Collins'~\cite{col75} or Hong's~\cite{hong90}),
the ``case discussion'' scheme of {\TCAD} avoids unnecessary computations 
that projection operator performs on unrelated branches. 
In addition, one observes that 
the reason why McCallum's~\cite{scott98} 
(including Brown's~\cite{brown01}) projection operators
may fail for some examples is due to the fact that 
they are missing a ``case discussion'' scheme.
McCallum's operator relies on the assumption that generically all coefficients
of a polynomial\footnote{More precisely, a multivariate polynomial
regarded as a univariate one with respect to its main variable.}
will not vanish simultaneously above a positive-dimensional component. 
If this assumption fails, then this operator is replaced by Collins-Hong
projection-operator~\cite{hong90}.
The fact that all coefficients of polynomial could vanish 
simultaneously above some component is never a problem in {\TCAD}.
For this reason, we view it as  an improvement of previous works.

Trying to use sophisticated algebraic elimination techniques to
improve CAD constructions is not a new idea. 
In papers~\cite{BH91, WBD12}, the authors investigated how to 
use Gr{\"o}bner bases to preprocess the input system 
in order to make the subsequent CAD computations more efficient.
The main difference between these two works and the work of~\cite{CMXY09}
is that the former approach is about preprocessing input for CAD while the latter
one  presents a different way of constructing CADs.

In~\cite{CMXY09}, the focus was on how to apply 
triangular decomposition techniques to compute CADs.
To this end, lots 
of existing high-level routines were used to facilitate
explaining ideas. 
These high-level routines involve many black-boxes, 
which hide many unnecessary or redundant computations. 
As a result, the computation time of {\TCAD} is much higher than that of {\CCAD},
although {\TCAD} computes usually less cells~\cite{Chen11}.

In the present paper, 
we abandon those black-boxes and compute {\TCAD} from scratch. 
It turns out that the key solution for avoiding redundant computations 
is to compute CCD in an {\em incremental manner}.
The same motivation and a similar strategy
appeared in~\cite{moreno00, CM11} in the context of
triangular decomposition of algebraic sets.
The core operation of such an incremental algorithm 
is an {\sf Intersect} operation, 
which refines an existing cylindrical tree 
w.r.t. a polynomial.
We dedicate Section~\ref{sec:incremental} to presenting a complete 
incremental algorithm for computing {\TCAD} 
by means of this {\sf Intersect} operation.

In~\cite{Strzebonski2010a}, the author
presented an algorithm for computing 
with semi-algebraic sets represented by cylindrical algebraic formulas. 
That algorithm also allows computing CAD in an incremental manner.
The underlying technique is based on the projection-lifting 
scheme where one first computes projection factor sets by 
a global projection operator. 
In contrast, 
the incremental algorithm presented here, 
is conducted by refining different branches 
of an existing tree via GCD computations.

This {\sf Intersect} operation can systematically take advantage of equational constraints. 
The problem of making use of equational constraints in CAD has been 
studied by many researchers~\cite{Collins98, McCallum2001, Brown05, McCallum2009}.
In Section~\ref{sec:equation},
we provide a detailed discussion on how we solve 
this problem.

When  applied to a polynomial system 
having finitely many complex solutions, our incremental CCD algorithm
specializes into computing a triangular decomposition,
say ${\cal D}$, 
such that the zero sets of the output regular chains are
disjoint.
Moreover, such a decomposition 
has no critical pairs in the sense of the 
equiprojectable decomposition algorithm of~\cite{DMSWX05a}.
This implies that only the ``Merge'' part of the 
``Split \& Merge'' algorithm of~\cite{DMSWX05a}
is required for turning ${\cal D}$ into an
 equiprojectable decomposition (which is a canonical
representation of the input variety, once the variable order is fixed).
Consequently, one could hope extending the
notion of equiprojectable decomposition (and related algorithms) 
to positive dimension by means of our incremental CCD algorithm.
This perspective can be seen as an indirect application 
of CAD to triangular decomposition.

As we shall review in Section~\ref{sec:complex}, 
a CCD is encoded by a tree data-structure.
 Then each path of this tree is a simple system
in the sense of~\cite{Thomas37, Wang98a}.
So the work presented here can also be used to 
compute a Thomas decomposition of a polynomial system~\cite{Wang98a, Thomas10}. 
Moreover, the decomposition we compute is not only disjoint, 
but also cylindrically arranged. 

The complexity of our algorithm cannot be better 
than doubly exponential in the number of variables~\cite{BrownDavenport2007}.
So the motivation of our work is to suggest possible ways 
to improve the practical applicability of CAD.
The benchmark in Section~\ref{sec:benchmark} shows that {\TCAD}
outperforms {\QEPCAD}~\cite{QEPCAD,Bro03} and {\sf Mathematica}~\cite{adam00}
for many well-known examples. 
The algorithm presented in this paper can support {\QE}. 
We have realized a preliminary implementation of an algorithm for doing {\QE}
via {\TCAD}. We will report on this work in a future paper.

\section{Complex cylindrical tree}
\label{sec:complex}

Throughout this paper, we consider a field $\K$ of characteristic zero and
denote by $\KK$ the algebraic closure of $\K$.
Let $\K[\xx]$ be the polynomial ring over the
field $\K$ with ordered variables $\xx= x_1 < \cdots < x_n$. Let $p\in {\K}[\xx]$ be
a non-constant polynomial and $x \in \xx$ be a variable.
We denote by ${\rm deg}(p,x)$ and ${\rm lc}(p,x)$
the degree and the leading coefficient of $p$ w.r.t. $x$.
The greatest variable appearing in $p$ is
called the {\em main variable}, denoted by $\mvar{p}$. 
The leading coefficient, the degree, the reductum of  $p$ w.r.t. $\mvar{p}$ are
called the {\em initial}, the {\em main degree}, the {\em tail} of $p$; 
they are denoted by $\init{p}$, $\mdeg{p}$, $\tail{p}$ respectively. 
The integer $k$ such that $x_k=\mvar{p}$ is called 
the {\em level} of the polynomial $p$.
We denote by $\der{p}$ the derivative  of $p$ w.r.t. $\mvar{p}$.
The notions presented below were introduced
in~\cite{CMXY09} and they are illustrated at the beginning
of Section~\ref{sec:datastructure}.

\smallskip\noindent{\small \bf Separation.}
Let $C$ be a subset of $\KK^{n-1}$
and ${P}\subset\K[x_1,\ldots,x_{n-1}, x_n]$ be a finite set of level $n$ 
polynomials.
We say that $P$ {\em separates above} $C$ if for each $\alpha\in C$:
\begin{itemizeshort}
\item for each $p\in{P}$, the polynomial \init{p} 
      does not vanish at $\alpha$, 
\item the polynomials $p(\alpha,x_n) \in \KK[x_n]$, for all $p\in{P}$, 
              are squarefree and coprime.
\end{itemizeshort}
Note that this definition allows $C$ to be a semi-algebraic set, 
see Theorem~\ref{Theorem:stack}.

\smallskip\noindent{\small \bf Cylindrical decomposition.}
By induction on $n$,
we define the notion of a {\em cylindrical decomposition of} ${\KK}^n$
together with that of the {\em tree associated with a 
cylindrical decomposition of} ${\KK}^n$.
For $n=1$, a cylindrical decomposition of $\KK$ is a 
finite collection of sets ${\DD} = \{D_1,\ldots,D_{r+1}\}$, 
where either $r=0$ and $D_1=\KK$, or $r>0$ and
there exists $r$  non-constant coprime squarefree polynomials 
$p_1,\ldots,p_{r}$ of $\K[x_1]$ such that for $1\leq i\leq r$
we have 
$
D_i=\{x_1\in\KK\mid p_i(x_1)=0\}, 
$
and 
$
D_{r+1}=\{x_1\in\KK\mid p_1(x_1)\cdots p_r(x_1) \neq 0\}.
$
Note that the $D_i$'s, for all $1\leq i\leq {r+1}$, form a partition
of $\KK$. 
The tree associated  with ${\DD}$ is a rooted tree 
whose nodes, other than the root, are $D_1, \ldots, D_r, D_{r+1}$
which all are leaves and children of the root.
Now let $n>1$, and let ${\DD}'=\{D_1,\ldots,D_s\}$ be any cylindrical 
decomposition of $\KK^{n-1}$. For each $D_i$, 
let $r_i$ be a non-negative integer and 
let $\{p_{i,1},\ldots,p_{i,r_i}\}$ be a set of polynomials 
which separates above $D_i$. 
If $r_i=0$, set $D_{i,1}=D_i\times\KK$. 
If $r_i>0$, set
$$
D_{i,j}=\{(\alpha,x_n)\in\KK^n\mid \alpha\in D_i 
\ {\rm and} \  p_{i,j}(\alpha,x_n)=0\},
$$
for $1\leq j\leq r_i$ and set
$
D_{i,r_{i}+1}=\left\{(\alpha,x_n)\in\KK^n\mid \alpha\in D_i
\ {\rm and} \  \left(\prod_{j=1}^{r_i}p_{i,j}(\alpha,x_n)\right)\neq0\right\}.
$
The collection 
${\DD}=\{D_{i,j}\mid 1\leq i\leq s, 1\leq j\leq r_{i}+1\}$ is called a
{\em cylindrical decomposition} of $\KK^n$. 
The sets $D_{i,j}$ are called the {\em cells} of ${\DD}$.
If ${\TT}'$ is the tree associated with ${\DD}'$
then the tree ${\TT}$ associated with  ${\DD}$ is defined 
as follows.
For each $1\leq i\leq s$, the set $D_i$ is a leaf
in ${\TT}'$ which has all $D_{i,j}$'s for children 
in ${\TT}$; thus the $D_{i,j}$'s are the leaves of ${\TT}$.

Note that each node $N$ of ${\TT}$ is 
either associated with no constraints,
or associated with a polynomial
constraint, which itself is either an equation or an inequation.
Note also that, if the level of the polynomial defining the constraint
at $N$ is ${\ell}$, then ${\ell}$ is the length of a path
from $N$ to the root.
Moreover, the polynomial constraints
along a path from the root to a leaf form 
a polynomial system called a {\em cylindrical system}  
{\em of}  $\K[x_1,\ldots,x_n]$ {\em induced by} ${\TT}$.
Let $S$ be such a cylindrical system.
We denote by $Z(S)$ the zero set of $S$.
Therefore, each cell of ${\DD}$ is the zero set of a 
cylindrical system induced by ${\TT}$.

Let $\Gamma$ be a sub-tree of ${\TT}$ such that
the root of $\Gamma$ is that of ${\TT}$.
Then, we call \, $\Gamma$ a {\em cylindrical tree of} $\K[x_1,\ldots,x_n]$
{\em induced by} ${\TT}$.
This cylindrical tree $\Gamma$ is said {\em partial} if it admits
a non-leaf node $N$ such that
the zero set of the constraint of $N$ is not equal
to the union of the zero sets of the constraints
of the children of $N$.
If $\Gamma$ is not partial, then it is called {\em complete}.

In the algorithms of Section~\ref{sec:incremental},
the cylindrical tree is an essential data structure.
Section~\ref{sec:datastructure} discusses the
main properties and operations on this data structure.

Let $F=\{f_1,\ldots,f_s\}$ be a finite set of polynomials 
of $\K[x_1<\cdots<x_n]$. 
A cylindrical decomposition ${\DD}$ of $\KK^n$ is 
called {\em $F$-invariant}
if for any given cell $D$ of  ${\DD}$ and any given polynomial $f\in F$, 
either $f$ vanishes at all points of $D$ or $f$ vanishes at no points of $D$.

\begin{Example}
Let $F := \{y^2+x, y^2+y\}$. 
An $F$-invariant cylindrical decomposition of $\C^2$
is illustrated by Figure~\ref{fig:tree}.
\begin{center}
\begin{figure}[htbp]
\scalebox{0.4}{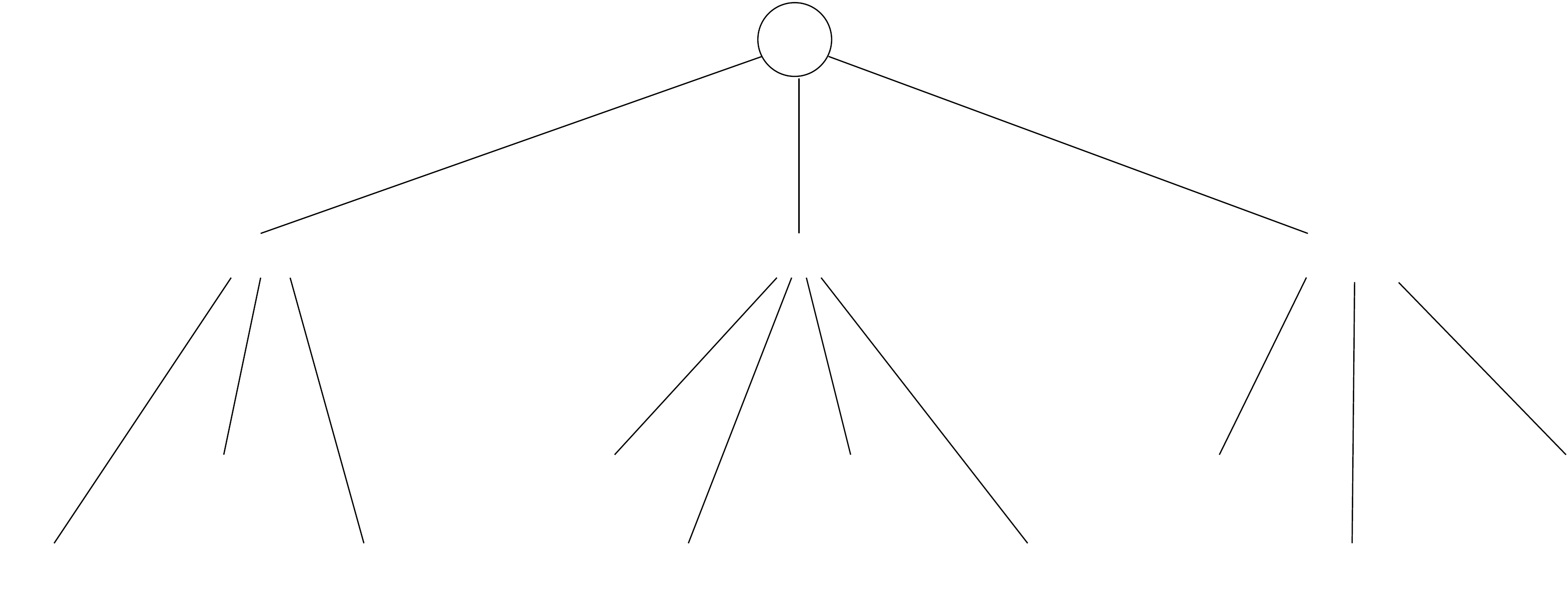 }
\caption{an $F := \{y^2+x, y^2+y\}$ invariant complex cylindrical tree}
\label{fig:tree}
\end{figure}
\end{center}
\end{Example}

We observe that every cylindrical system induced by a cylindrical tree
is 
a {\em simple system}, as defined by Wang in~\cite{Wang98a}.
This notion was first introduced by Thomas in 1937~\cite{Thomas37}.
Simple systems have many nice properties. For example, 
if $[A,B]$ is a simple system, then 
the pair $[A,\prod_{p\in B} p]$ is a squarefree 
regular system, as defined by Wang in~\cite{Wang98a,Wang00}.

Let $\Gamma$ be a cylindrical system of $\K[\x]$ and
let $p$ be a polynomial of $\K[\x]$.
We say that $p$ is {\em invertible modulo} $\Gamma$ if 
for any $\alpha \in Z(\Gamma)$, we have $p(\alpha) \neq 0$.
We say that $p$ is {\em zero modulo} $\Gamma$ 
if for any $\alpha \in Z(\Gamma)$, we have $p(\alpha)=0$. 
We say that $p$ is {\em sign invariant above} $\Gamma$ if $p$
is either zero or invertible modulo $\Gamma$.
Let $q$ be another polynomial of $\K[\x]$.
We say that $p=q$ modulo $\Gamma$ 
if $Z(\Gamma)\cap Z(p)=Z(\Gamma)\cap Z(q)$.

\smallskip\noindent{\small \bf Greatest common divisor (GCD).}
Let $p$ and $f$ be two level $n$ polynomials in $\K[\x]$.
Let $\Gamma$ be a cylindrical system of $\K[x_1,\ldots,x_{n-1}]$.
For any $u\in\KK^{n-1}$ of $Z(\Gamma)$, assume at least one of $\lc{p,x_n}(u)$
and $\lc{f,x_n}(u)$ is not zero.
A polynomial $g \in \K[\x]$ is called a {\em GCD of}
$p$ {\em and} $f$ {\em modulo} $\Gamma$
if for any $u\in\KK^{n-1}$ of $Z(\Gamma)$,
\begin{itemizeshort}
\item $g(u)$ is a GCD of $p(u)$ and $f(u)$ in $\KK[x_n]$, and
\item we have $\lc{g,x_n}(u)\neq 0$.
\end{itemizeshort}
Let $d_p=\deg(p, x_n)$, $d_f=\deg(f, x_n)$.
Recall that we assume $d_p,d_f\geq 1$.
Let $\lambda=\min{d_p, d_f}$.
Let $\Gamma$ be a cylindrical system of $\K[x_1,\ldots,x_{n-1}]$.
Let $S_0,\ldots,S_{\lambda-1}$ be the subresultant polynomials~\cite{Mis93, Ducos00} 
of $p$ and $f$ w.r.t. $x_n$.
Let $s_i=\coeff{S_i, x_n^{i}}$ be the principle
subresultant coefficient of $S_i$, for $0 \leq i \leq \lambda-1 $.
If $d_p\geq d_f$, we define $S_{\lambda}=f$, $S_{\lambda+1}=p$, 
$s_{\lambda}=\init{f}$ and $s_{\lambda+1}=\init{p}$.
If $d_p<d_f$, we define $S_{\lambda}=p$, $S_{\lambda+1}=f$, 
$s_{\lambda}=\init{p}$ and $s_{\lambda+1}=\init{f}$.

\begin{Theorem}
\label{Theorem:gcd}
Let $j$  be an integer, with $1\leq  j \leq \lambda+1$,
such that $s_j$ is invertible modulo $\Gamma$
and such that for any $0\leq i< j$, we have $s_i=0$ modulo $\Gamma$.
Then $S_j$ is a GCD of $p$ and $f$ modulo $\Gamma$.
\end{Theorem}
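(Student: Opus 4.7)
The plan is to establish the theorem by a pointwise argument over $Z(\Gamma)$, reducing the parametric statement to the classical specialization theorem for subresultant chains. I fix an arbitrary $u \in Z(\Gamma)$. The standing assumption that at least one of $\lc{p,x_n}(u)$ and $\lc{f,x_n}(u)$ is nonzero guarantees that the specializations $p(u,x_n)$ and $f(u,x_n)$ are elements of $\KK[x_n]$ that are not simultaneously zero, so their GCD in $\KK[x_n]$ is well-defined up to a nonzero scalar. My goal is to verify the two conditions defining a GCD modulo $\Gamma$ at this arbitrary $u$.

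First, I translate the hypotheses pointwise: since $s_i = 0$ modulo $\Gamma$ for $0 \leq i < j$, we have $s_i(u) = 0$; since $s_j$ is invertible modulo $\Gamma$, we have $s_j(u) \neq 0$. I then invoke the classical subresultant gap theorem (see~\cite{Mis93, Ducos00}), which asserts that when the first non-vanishing principal subresultant coefficient in the specialized chain is $s_j(u)$, the corresponding polynomial $S_j(u,x_n)$ is, up to a nonzero constant in $\KK$, a GCD of $p(u,x_n)$ and $f(u,x_n)$ in $\KK[x_n]$. This yields the first required condition. For the leading-coefficient condition I distinguish three cases. When $1 \leq j \leq \lambda - 1$, the leading coefficient of $S_j$ in $x_n$ is exactly $s_j$, which is invertible modulo $\Gamma$ by hypothesis. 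When $j = \lambda$ (resp.\ $j = \lambda+1$), the polynomial $S_j$ is by definition one of $p$ or $f$, and its leading coefficient in $x_n$ coincides with the corresponding $\init{p}$ or $\init{f}$, which by definition equals $s_j$; again the hypothesis applies. In every case, $\lc{S_j,x_n}(u) \neq 0$.

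The main technical obstacle is the clean handling of the boundary indices $j = \lambda$ and $j = \lambda+1$, where the specializations $p(u,x_n)$ or $f(u,x_n)$ may undergo a degree drop at $u$. One must confirm that the classical gap theorem remains applicable despite such drops, which is precisely the reason the subresultant chain has been extended here with $S_\lambda$ and $S_{\lambda+1}$ playing the role of the original polynomials, together with their initials as the corresponding principal coefficients. Once this bookkeeping is in place, the passage from the pointwise conclusion at each $u \in Z(\Gamma)$ to the parametric statement that $S_j$ is a GCD of $p$ and $f$ modulo $\Gamma$ is immediate from the definition.
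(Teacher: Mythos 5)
Your proof is correct and follows essentially the same route as the paper's: the paper simply remarks that the result follows from the specialization property of subresultant chains and cites Theorem~5 of~\cite{CM12}, whereas you unfold the pointwise argument at each $u\in Z(\Gamma)$, identify $\lc{S_j,x_n}$ with $s_j$ in all index ranges including the extended indices $j=\lambda,\lambda+1$, and flag the degree-drop subtlety that the extended chain is designed to absorb. The extra detail is welcome, though be aware that the precise form of the specialization statement you invoke must accommodate the case where exactly one of $\lc{p,x_n}(u)$, $\lc{f,x_n}(u)$ vanishes (the standing hypothesis only guarantees one of them is nonzero); the version proved in~\cite{CM12} handles this by tracking the nonzero scalar factor that appears when specializing a defective chain, and your argument implicitly relies on that.
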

\begin{proof}
It can be easily proved by the specialization property 
of subresultant chains. 
In particular, it is a direct corollary of 
Theorem 5 in~\cite{CM12}.
\end{proof}

\section{Data structure for cylindrical decomposition}
\label{sec:datastructure}

In this section, we describe the data-structures
that are used by the algorithms presented in this paper
for computing cylindrical decompositions.
To understand the motivation of our algorithm design, let us
consider a simple example with $n = 2$ variables.
Let $a, b$ 
be two coprime squarefree non-constant univariate polynomials in $k[x_1]$.
Observe that $L := k[x_1] / \langle a \, b \rangle$ is a direct product
of fields.
Let also $c, d$ be two bivariate polynomials 
of $k[x_1, x_2]$, such that
 ${\rm deg}(c,x_2) > 0$, ${\rm deg}(d,x_2) > 0$, 
and ${\rm lc}(c,x_2) =  {\rm lc}(d,x_2) = 1$ hold
and such that $c, d$ are coprime and squarefree univariate
as polynomials of 
$L[x_2]$.
Therefore the following four polynomial systems
are simple systems
\begin{equation*}
\begin{array}{c}
\left\{
\begin{array}{c}
a(x_1) b(x_1) = 0 \\
c(x_1, x_2) = 0
\end{array}
\right., \  
\left\{
\begin{array}{c}
a(x_1) b(x_1) = 0 \\
d(x_1, x_2) = 0
\end{array}
\right., \
\left\{
\begin{array}{c}
a(x_1) b(x_1) = 0 \\
c(x_1, x_2) d(x_1, x_2) \neq 0
\end{array}
\right., \  
\left\{
\begin{array}{c}
a(x_1) b(x_1) \neq 0 \\
\end{array}
\right.
\end{array}
\end{equation*}
that we denote respectively by $S_1, S_2, S_3, S_4$.
It is easy to check that the 
zero sets $Z(S_1)$, $Z(S_2)$, $Z(S_3)$, $Z(S_4)$
are the cells of a cylindrical decomposition ${\DD}$ of ${\KK}^2$.

Let $f \in k[x_1]$ be another univariate polynomial.
Assume that one has to {\em refine} ${\DD}$ into a cylindrical decomposition
of ${\KK}^2$ which is required to be $\{ f \}$-invariant.
That is, one has to test whether $f$ is invertible or zero
modulo each of the systems $S_1, S_2, S_3, S_4$,
and further decompose when appropriate.
Assume that the polynomial 
$a$ divides $f$ whereas $b, f$ are coprime.
Assume also that the system $S_1$ is processed first in time.
By computing ${\rm gcd}(f,ab)$, which yields $a$, 
one splits $S_1$ into the following two
sub-systems that we denote by $S_{1,1}$ and $S_{1,2}$.
\begin{equation*}
\left\{
\begin{array}{c}
a(x_1) = 0 \\
c(x_1, x_2) = 0
\end{array}
\right., \  {\rm and} \ 
\left\{
\begin{array}{c}
b(x_1) = 0 \\
c(x_1, x_2) = 0.
\end{array}
\right.
\end{equation*}
Assume that $S_2$ is processed next.
By computing ${\rm gcd}(f,ab)$ (again) 
one splits $S_2$ into the following two
sub-systems that we denote by $S_{2,1}$ and $S_{2,2}$.
\begin{equation*}
\left\{
\begin{array}{c}
a(x_1) = 0 \\
d(x_1, x_2) = 0
\end{array}
\right., \  {\rm and} \ 
\left\{
\begin{array}{c}
b(x_1) = 0 \\
d(x_1, x_2) = 0.
\end{array}
\right.
\end{equation*}
Consequently, in the course of the creation 
of $S_{1,1}, S_{1,2}$, $S_{2,1}$ and $S_{2,2}$,
the same polynomial GCD  and the same 
field extensions (namely $k[x_1] / \langle a \rangle$
and $k[x_1] / \langle b \rangle$) were computed twice.
This duplication of calculation and data is a common
phenomenon and a performance bottleneck in most
algorithms for decomposing polynomial systems.

Mathematically, each constructible set should not be
represented more than once in a computer program.
To implement this idea, all constructible sets
manipulated during the execution of a given computer 
program should be seen as part of the same {\em universe},
say ${\KK}^n$. Moreover, the subroutines of this program
should have the same view on the universe, which is then
a {\em shared data-structure}, such that whenever a
subroutine modifies the universe all subroutines
have immediate access to the modified universe.
Satisfying these requirements is a well-known 
challenge in computer science, an instance of which
is the question of {\em memory consistency}
for shared-memory parallel computer architectures,
such as multicores. With our above example, even if 
we do not intend to run computations concurrently,
we are concerned with the practical efficiency
and ease-of-use of the mechanisms that maintain {\em up-to-date}
all views on the universe.

Recall that a cylindrical decomposition can be identified to
a  tree 
where each node is a constructible set of ${\KK}^n$ given
by either an equation constraint, or an inequation
constraint, or no constraints at all.
In this latter case, the corresponding constructible
set is the whole space.
All algorithms in Section~\ref{sec:incremental}
work on a given cylindrical decomposition ${\DD}$
encoded by a tree ${\TT}$ (as defined in 
Section~\ref{sec:complex}).
That is, the tree $T$ is regarded as the universe.

We assume that there is a procedure for updating the 
tree ${\TT}$, which, given a ``node-to-be-replaced'' $N$ and 
its ``replacing nodes'' $N_1, \ldots, N_e$, 
is called {\sf split}$(N; N_1, \ldots, N_e)$ and 
works as follows:
\begin{enumerateshort}
\item for $i = 1, \ldots, e$,  for each child $C$ of $N$ deeply copy (thus 
      creating new nodes) the sub-tree rooted at $C$
      and make that copy of $C$ a child of $N_i$,
\item update the parent of $N$ such that $N_1, \ldots, N_e$
      are new children of the the parent of $N$,
\item remove the entire sub-tree rooted at $N$ from the universe, 
      including $N$.
\end{enumerateshort}
We assume that all updates are performed sequentially (thus using
mutual exclusion mechanism in case of concurrent execution
of the algorithms of Section~\ref{sec:incremental})
such that no data-races can occur.

We also assume that each node $N$  (whether it is a node 
in the present or has been removed from the universe)
has a unique key, called 
{\sf key}$(N)$, and a data field, called {\sf value}$(N)$,
storing various information including:
\begin{itemizeshort}
\item a time stamp {\sc past} or {\sc present},
\item if {\sc past}, the list of its replacing nodes (as specified
      with the {\sf split} procedure) and the list
      of its children at the time it was replaced,
\item if {\sc present}, the list of its  children and a pointer to the parent.
\end{itemizeshort}

All nodes are stored in a {\em dictionary} $H$
which can be accessed by all subroutines.
Modifying the universe means updating $H$
using the {\sf split} procedure.
Since all our algorithms stated in Section~\ref{sec:incremental} are sequential,
no synchronization issue has to be considered.
The mechanism described above allows us
to achieve our goals.

\section{Constructing a cylindrical tree incrementally}
\label{sec:incremental}

In this section, we present an incremental algorithm
for computing a cylindrical tree, as defined in Section~\ref{sec:complex}.
We start by commenting on the style of the pseudo-code.
Secondly, we present the specifications of the algorithm
and related subroutines. 
Thirdly, we state all the algorithms in pseudo-code style. 
Finally, proof sketches of the algorithms are provided at the end of
this section.

Following the principles introduced in Section~\ref{sec:datastructure},
our procedures operate on a ``universe'' (which is a
cylindrical tree $T$) that they modify when needed.
These modifications are of two types:
\begin{itemizeshort}
\item splitting a node,
\item attaching information to a node.
\end{itemizeshort}
In addition to the attributes described in Section~\ref{sec:datastructure},
a node has attributes corresponding to the results of operations
like {\sf Squarefree}, {\sf Gcd}, {\sf Intersect}.
In other words, our procedures
do not return values; instead they store their
results in the nodes of the universe.
This technique greatly simplifies pseudo-code.

Since attributes of nodes are intensively used
in our pseudo-code, we use the standard ``dot''
notation of object oriented programming languages.
In addition, since a node can have many attributes,
we make the following convention.
Suppose that a node $V$ is split into two nodes
$V_1$ and $V_2$.
Some attributes are likely
to have  different values in  $V_1$ (resp. $V_2$)
and $V$. But most of them will often have
the same values in both nodes.
Therefore, after setting up the values of the
attributes that differ, we simply write
$V_1$.others := $V$.others
to define the attributes of $V_1$ whose values are unchanged
w.r.t. $V$.

Several procedures iterate through all the paths of the universe $T$.
By path, we mean a path (in the sense of graph theory)
from the root of $T$ to a leaf of $T$.
The current path is often denoted by $\Gamma$ or $C$.
Recall from Section~\ref{sec:complex} 
that a path in $T$ corresponds to a simple system, say $S$.
Computing modulo $S$ may split $S$ and thus modify the universe {\em automatically},
that is, in a transparent manner in the pseudo-code.
However, splitting $S$ also changes the current path.
For clarity, we explicitly invoke a function called {\sf UpdatePath},
which updates its first argument (namely the current path) from the universe.

In order to iterate through all the paths of the universe $T$,
we use a function {\sf NextPathToDo}.
This command is a {\em generator} or an {\em iterator}
in the sense of the theory of programming languages.
That is, it views $T$ as a stream of paths and returns
the next path-to-be-visited, if any.
Thanks to the fact that the universe is always up-to-date,
the function {\sf NextPathToDo} is able to return the next path-to-be-visited
in the current state of the universe.

A frequently used operation on the universe and its paths
is {\sf ExtractProjection}, see for instance Algorithm~\ref{Algo:IntersectPath}.
When applied to the universe $T$ and an integer $k$ (for $0 \leq k < n$,
where $n$ is the length of a path from the root of $T$ to a leaf of $T$)
{\sf ExtractProjection} returns a ``handle'' on the universe ``truncated''
at level $k$, that is, the universe where all nodes of level higher than $k$
are ignored (thus viewing the level $k$ nodes as leaves).
When applied to path, {\sf ExtractProjection} has a similar output.

We often say that a function (see for instance
Algorithm~\ref{Algo:Intersect}) returns a refined cylindrical decomposition.
This is another way of saying that the universe is updated
to a new state corresponding to a cylindrical decomposition
refining (in the sense of a partition of a set refining another 
partition of the same set) the cylindrical decomposition
of the previous state.

After these preliminary remarks on the pseudo-code,
we present the specifications of the algorithm
and related subroutines. 

The top level algorithm for computing a cylindrical tree
is described by Algorithm~\ref{Algo:CylindricalDecompose}. 
It takes a set $F$ of non-constant polynomials in $\K[x_1<\cdots<x_n]$
as input and returns an $F$-invariant cylindrical decomposition of $\KK^n$.
This algorithm relies on a core operation, 
called {\sf Intersect}, which computes a cylindrical decomposition 
in an incremental manner.

The {\sf Intersect} operation is described by Algorithm~\ref{Algo:Intersect}.
It takes a cylindrical tree $T$ and a polynomial $p$ of $\K[x_1<\cdots<x_n]$
as input. It refines tree $T$ such that $p$ is sign invariant above each path of 
the refined tree $T$.
This operation is achieved by refining each path of $T$ with {\sf IntersectPath}.

The {\sf IntersectPath} operation is described by Algorithm~\ref{Algo:IntersectPath}.
It takes a polynomial $p$, a  cylindrical tree $T$ and a path $\Gamma$ of $T$ in $\K[x_1<\cdots<x_n]$
as input. It refines $\Gamma$ and updates the tree $T$ accordingly such that 
$p$ is sign invariant above each path derived from $\Gamma$ in the updated tree $T$.
This operation finds the node $N$ in $\Gamma$ whose level is the same as that of $p$.
Let ${\Gamma}_N$ be the sub-path of $\Gamma$ from $N$ to the root of $T$.
The {\sf IntersectPath} operation then 
calls the routine {\sf IntersectMain} so as to refine ${\Gamma}_N$ into a tree $T_N$
such that $p$ becomes sign invariant w.r.t. $T_N$.

The routine {\sf IntersectMain} is described by Algorithm~\ref{Algo:IntersectMain}.
It takes a cylindrical tree $T$, a path $\Gamma$ of $T$, 
and a polynomial of the same level as the leaves of $T$ in $\K[x_1<\cdots<x_n]$
as input. It refines $\Gamma$ and updates the tree $T$ accordingly such that 
$p$ becomes sign invariant above each path derived from $\Gamma$ 
in the updated tree.

The  routine {\sf IntersectMain} works in the following way.
It first splits $\Gamma$ such that 
above the projection $C_{n-1}$ of each new branch $C$ of $\Gamma$ in $\KK^{n-1}$, 
the number of distinct roots of $p$ w.r.t. $x_n$ is invariant.
This is achieved by the operation {\sf Squarefree}, 
described by Algorithm~\ref{Algo:Squarefree}.
The squarefree part of $p$ above a branch $C$ 
is denoted by $sp$.
If $p$ has no roots or is identically zero above $C_{n-1}$,
the sign of $p$ above $C$ is determined immediately. 
Otherwise, a case discussion is made according to 
the structure of the leaf node $V$ of $C$.
If $V$ has no constraints associated to it, 
then $V$ is simply split into two new nodes $sp=0$ and $sp\neq 0$.
Assume now that $V$ has a constraint, which can
be either of the form $f=0$ or of the form $f\neq 0$, 
where $f$ is a level $n$ polynomial squarefree modulo $C_{n-1}$.
This case is handled by computing the GCD $g$ of $sp$ and $f$ modulo $C_{n-1}$.
The node $V$ then splits based on the GCD $g$ and the co-factors of $sp$ and $f$.

The GCD is computed by the operation ${\sf Gcd}$, 
described by Algorithm~\ref{Algo:Gcd} and~\ref{Algo:Gcdi}.
The co-factors are computed by Algorithm~\ref{Algo:CoFactor}.
The {\sf Squarefree} and {\sf Gcd} operations
rely on the operation {\sf MakeLeadingCoefficientInvertible},
described by Algorithm~\ref{Algo:InvertLeadingCoefficient}.
This latter operation takes as input a polynomial $p$ of $\K[x_1<\cdots<x_n]$,
a cylindrical tree $T$ of $\K[x_1<\cdots<x_{n-1}]$
and a path $\Gamma$ of $T$.
Then, it refines $\Gamma$ and updates $T$ accordingly such that
above each path $C$ of $T$ derived from $\Gamma$, 
the polynomial $p$ is either zero or its leading coefficient is invertible.

All the algorithms also rely on the following three operations 
which perform manipulations and traversal of the tree data structure. 
For these three operations, only specifications are provided below
while their algorithms are explained in Section~\ref{sec:datastructure}.

\begin{algorithm}
\caption{\Update{\Gamma, T}}
\label{Algo:Update}
\begin{itemizeshort}
\item [-] {Input:}
A cylindrical tree $T$.
A path $\Gamma$ in some past state of $T$. 
\item [-] {Output:} 
A subtree $ST$ in present state of $T$. 
$ST$ is derived from $\Gamma$ according to the historical data of $T$.
\end{itemizeshort}
\end{algorithm}
\begin{algorithm}
\caption{\Project{T, k}}
\label{Algo:Project}
\begin{itemizeshort}
\item [-] {Input:}
A cylindrical tree $T$ of $\K[x_1<\cdots <x_n]$.
An integer $k$, $0\leq k \leq n$.
\item [-] {Output:} 
A cylindrical tree $T_k$ in $\K[x_1<\cdots < x_k]$
such that $T_k$ is the projection of $T$ in $\K[x_1 <\cdots < x_k]$. 
\end{itemizeshort}
\end{algorithm}
\begin{algorithm}
\label{Algo:NextPathToDo}
\caption{{\sf NextPathToDo$_n$}(T)}
\begin{itemizeshort}
\item [-] {Input:}
A cylindrical tree $T$ in $\K[x_1<\cdots<x_n]$.
\item [-] {Output:} 
For a fixed traversal order of a tree, 
return the first ``ToDo'' path $\Gamma$ of $T$.
\end{itemizeshort}
\end{algorithm}

\begin{algorithm}
\label{Algo:CylindricalDecompose}
\linesnumbered
\KwIn{
$F$ is a set of non-constant polynomials in $\K[x_1<\cdots<x_n]$.
}
\KwOut{
An $F$-invariant cylindrical decomposition of $\KK^n$.
}
\caption{\CylindricalDecompose{F}}
\Begin{
create a tree $T$ with only one vertex $V_0$: the $root$ of $T$\;
\For{$i$ from $1$ to $n$}{
create a vertex $V_i$;
$V_i.signs := \emptyset$; $V_i.formula :=$ ``any $x_i$''\;
$V_{i-1}.child := V_i$; 
}
\For{$p\in F$}{
     ${\sf Intersect}_n(p, T)$\;
}
return $T$\;
}
\end{algorithm}

\begin{algorithm}
\label{Algo:Intersect}
\linesnumbered
\caption{${\sf Intersect}_n(p, T)$}
\KwIn{
A cylindrical tree $T$ of $\K[x_1<\cdots<x_n]$.
A non-constant polynomial $p$ of $\K[x_1<\cdots<x_n]$. 
}
\KwOut{
A refined cylindrical decomposition such that 
$p$ is sign invariant above each path of $T$.
}

    \While{$\Gamma := {\sf NextPathToDo}_n(T)\neq \emptyset$}{
         ${\sf IntersectPath}_n(p, \Gamma, T)$\;
    }
\end{algorithm}

\begin{algorithm}
\label{Algo:IntersectPath}
\linesnumbered
\caption{${\sf IntersectPath}_n(p, \Gamma, T)$}
\KwIn{A cylindrical tree $T$ of $\K[x_1<\cdots<x_n]$.
A path $\Gamma$ of $T$. A polynomial $p$ of $\K[x_1<\cdots<x_n]$. 
}
\KwOut{
A refined cylindrical decomposition $T$ such that
$p$ is sign invariant above each path derived from $\Gamma$.
}
\Begin{
    \uIf{$p\in\K$}{
         return;
    }
    \Else{
      $k := \level{p}$\;
      \uIf{$k=n$}{
         ${\sf IntersectMain}_n(p, \Gamma, T)$\;
      }
      \Else{
         $T_k := \Project{T, k}$;$\Gamma_k := \Project{\Gamma,k}$\;
         ${\sf IntersectMain}_k(p, \Gamma_k, T_k)$\;
         $\Update{\Gamma, T}$\;
         \For{each leaf $V$ of $\Gamma$}{
             Let $L_k$ be the ancestor of $V$ of level $k$;
             $V.signs[p] := L_k.signs[p]$ \;
         }
      }
    }
}
\end{algorithm}

\begin{algorithm}
\label{Algo:IntersectMain}
\linesnumbered
\KwIn{A cylindrical tree $T$ of $\K[x_1<\cdots<x_n]$.
A path $\Gamma$ of $T$.
A polynomial $p$ of level $n$ in $\K[x_1<\cdots<x_n]$.
}
\KwOut{
A refined cylindrical decomposition $T$ such that
$p$ is sign invariant above each path derived from $\Gamma$.
}
\caption{${\sf IntersectMain}_n(p, \Gamma, T)$}
\Begin{
    $T_{n-1} := \Project{T, n-1}$; $\Gamma_{n-1} := \Project{\Gamma, n-1}$\;
          $\Squarefree{p, \Gamma_{n-1}, T_{n-1}}$\;
    $\Update{\Gamma, T}$\;
    \While{ $C := {\sf NextPathToDo}_n(\Gamma)\neq \emptyset$}{
        $V := C.leaf$; $C_{n-1}:=\Project{C, n-1}$\;
        $sp := C_{n-1}.leaf.Squarefree[p]$\;
        \uIf{$sp=0$}{
             $V.signs[p] := 0$\;
        }
        \uElseIf{$sp=1$}{
             $V.signs[p] := 1$\;
        }
        \uElseIf{$V.formula$ is ``any $x_n$''}{
            split $V$ into two new vertices $V_1$ and $V_2$\;
            $V_1.formula := sp = 0$;
            $V_1.signs := V.signs$; $V_1.signs[p] := 0$\;
            $V_2.formula := sp \neq 0$;
            $V_2.signs := V.signs$; $V_2.signs[p] := 1$\;  
            $V_1.others := V.others$; $V_2.others := V.others$\; 
            $C_{n-1}.leaf.children := V_1,V_2$\;
        }
        \Else{
            {\small\tcp{$V.formula$ is of the form $f=0$ or $f\neq 0$}}
            $\RegularGcd{sp, f, C_{n-1}, T_{n-1}}$\;
            $\Update{C, T}$\;
            \For{each leaf $V$ of $C$}{
                let $L$ be the parent of $V$\;
                $cp, g, cf := \CoFactor{sp, L.Gcd[sp, f], f}$\;
                \uIf{$V.formula$ is of the form $f=0$}{
                     \uIf{$g=1$}{
                         $V.signs[p] := 1$\;
                     }
                     \uElseIf{$cf=1$}{
                         $V.signs[p] := 0$\;
                     }
                     \Else{
                         split $V$ into two new vertices $V_1$ and $V_2$\;
                         $V_1.formula := g = 0$;$V_1.signs := V.signs$; $V_1.signs[p] := 0$\;
                         $V_2.formula := cf = 0$;$V_2.signs := V.signs$; $V_2.signs[p] := 1$\;   
                         $V_1.others := V.others$;
                         $V_2.others := V.others$\; 
                         $L.children := V_1,V_2$\;
                     }
                }
                \Else{
                    \eIf{$cp=1$}{
                         $V.signs[p] := 1$\;
                     }{
                         split $V$ into two new vertices $V_1$ and $V_2$\;
                         $V_1.formula :=  cp = 0$; $V_1.signs := V.signs$; $V_1.signs[p] := 0$\;
                         $V_2.formula := (f*cp) \neq 0$\; $V_2.signs := V.signs$; $V_2.signs[p] := 1$\;  
                         $V_1.others := V.others$; $V_2.others := V.others$\; 
                         $L.children := V_1,V_2$\;
                    }
                }
            }
        }
      
    }
}
\end{algorithm}

\begin{algorithm}
\label{Algo:Squarefree}
\linesnumbered
\caption{\Squarefree{p, \Gamma, T}}
\KwIn{
A cylindrical tree $T$ of $\K[x_1<\cdots<x_{n-1}]$.
A path $\Gamma$ of $T$.
A polynomial $p$ of level $n$.
}
\KwOut{
A refined cylindrical tree $T$ of $\K[x_1<\cdots<x_{n-1}]$. 
Above each path $C$ of $T$ derived from $\Gamma$, 
there is a dictionary $C.leaf.Squarefree$.  
Let $p^* := C.leaf.Squarefree[p]$. We have:
\begin{itemizeshort}
\item $p=p^*$ modulo $C$.
\item If $p^*$ is of level $n$, then both $\init{p^*}$ and $\discrim{p^*}$
are invertible modulo  $C$.
\item If $p^*$ is of level less than $n$, then $p^*$ is 
either $0$ or $1$.
\end{itemizeshort}
}
\Begin{
    \If{$n=1$}{
        let $r$ be the root of $T$;
        $r.Squarefree[p] := \SquarefreePart{p}$\;
        return
    }
    \RegularizeInitial{p, p, \Gamma, T}\;
    \While{$C :=  \NextPathToDo{\Gamma}\neq \emptyset$}{
        $f := C.leaf.InvertLc[p]$\;
        \uIf{$\level{f}<n$ or $\deg(f,x_n)=1$}{
            $C.leaf.Squarefree[p] := f$
        }
        \Else{
            \RegularGcd{f,\der{f}, C, T}\;
            \For{each leaf $L$ of $C$}{
                $g := L.Gcd[f, \der{f}]$\;
                \uIf{$g=1$}{
                     $L.Squarefree[p] := f$
                }
                \Else{
                     $L.Squarefree[p] := \pquo{f, g}$ 
                }
            }
        }
    }
}
\end{algorithm}

\begin{algorithm}
\linesnumbered
\caption{\RegularGcd{p, f, \Gamma, T}}
\label{Algo:Gcd}
\KwIn{
 A cylindrical tree $T$ of $\K[x_1<\cdots<x_{n-1}]$.
A polynomial $p\in \K[x_1<\cdots<x_n]$ of level $n$.
 A path $\Gamma$ of $T$.
A polynomial $f$ of level $n$ such that $\init{f}$ 
is invertible modulo $\Gamma$.
}
\KwOut{
A refined cylindrical tree $T$. 
Above each path $C$ of $T$ derived from $\Gamma$, 
there is a dictionary $C.leaf.Gcd$ 
such that $C.leaf.Gcd[p, f]$ is a GCD of $p$ and $f$ modulo $C$. 
}
\Begin{
    let $S$ be the subresultant chain of $p$ and $f$\;
    \eIf{$\mdeg{p}\geq \mdeg{f}$}{
        $d := \mdeg{f}$
    }{
        $d := \mdeg{p}+1$  
    }
    return $\RegularGcd{p, f, S, d, 0, \Gamma, T}$\;
}
\end{algorithm}

\begin{algorithm}
\label{Algo:Gcdi}
\linesnumbered
\caption{\RegularGcd{p, f, S, d, i, \Gamma, T}}
\KwIn{
\begin{itemizeshort}
\item A polynomial $p\in \K[x_1<\cdots<x_n]$ of level $n$.
\item A polynomial $f$ of level $n$ such that $\lc{f}$ is invertible modulo $\Gamma$.
\item The subresultant chain $S$ of $p$ and $f$ w.r.t. $x_n$.
\item A non-negative integer $d$ (as defined in the pseudo-code of 
      Algorithm~\ref{Algo:Gcd}) and such that the principle subresultant coefficient 
      $s_d$ is invertible modulo $\Gamma$.
\item A non-negative integer $i$ such that $0 \leq i \leq d$ and 
      the principle subresultant coefficient $s_j$ is zero 
      modulo $\Gamma$, for all $0\leq j < i$.
\item A path $\Gamma$ of $T$.
\item A cylindrical tree $T$ of $\K[x_1<\cdots<x_{n-1}]$.
\end{itemizeshort}
}
\KwOut{
A refined cylindrical tree $T$. 
Above each path $C$ of $T$ derived from $\Gamma$, 
there is a dictionary $C.leaf.Gcd$ 
such that $C.leaf.Gcd[p, f]$ is a GCD of $p$ and $f$ modulo $C$. 
}
\Begin{
    \If{$i=d$}{
        $\Gamma.leaf.Gcd[p, f]:=S_i$\;
        return\;
    }
    ${\sf IntersectPath}_{n-1}(s_i, \Gamma, T)$\;
    \While{$C :=  \NextPathToDo{\Gamma}\neq\emptyset$}{
        \uIf{$C.leaf.signs[s_i]=1$}{
             \uIf{$i=0$}{
                 $C.leaf.Gcd[p, f] := 1$
             }
             \Else{
                 $C.leaf.Gcd[p, f] := S_i$ 
             }
        }
        \Else{
             
             $\RegularGcd{p, f, S, d, i+1, C, T}$    
        }
    }
}
\end{algorithm}

\begin{algorithm}
\label{Algo:CoFactor}
\linesnumbered
\caption{\CoFactor{p, g, f}}
\KwIn{
Two polynomials $p$ and $f$ of level $n$ in $\K[x_1<\cdots<x_n]$.
A polynomial $g$ which is either $1$ or of level $n$ in $\K[x_1<\cdots<x_n]$.
}
\KwOut{
As described by the algorithm.
}
\Begin{
                     \uIf{$g=1$}{
                          $cp := p$; $gg := 1$;  $cf := f$\;
                     }
                     \uElseIf{$\mdeg{g}=\mdeg{f}$}{
                          $gg := f$\;
                          \eIf{$\mdeg{g}=\mdeg{p}$}{
                               $cf := 1$; $cp := 1$\;
                          }{
                               $cf := 1$; $cp := \pquo{p, gg}$
                          }
                     }
                     \uElseIf{$\mdeg{g}=\mdeg{p}$}{
                          $gg := p$; $cf := \pquo{f,gg}$; $cp := 1$\; 
                     }
                     \Else{
                           $cp := \pquo{p, g}$; $cf := \pquo{f,g}$; $gg := g$\; 
                     }
return $cp, gg, cf$\;
}
\end{algorithm}

\begin{algorithm}
\label{Algo:InvertLeadingCoefficient}
\linesnumbered
\caption{\RegularizeInitial{p, \bar{p}, \Gamma, T}}
\KwIn{
A polynomial $p$ of $\K[x_1<\cdots<x_n]$.
A polynomial $\bar{p}$ of $\K[x_1<\cdots<x_n]$ 
such that $p=\bar{p}$ modulo $\Gamma$.
A cylindrical tree $T$ of $\K[x_1<\cdots<x_{n-1}]$.
A path $\Gamma$ of $T$.
}
\KwOut{
A refined cylindrical tree $T$ of $\K[x_1<\cdots<x_{n-1}]$. 
Above each path $C$ of $T$ derived from $\Gamma$, 
there is a dictionary $C.leaf.InvertLc$. 
Let $p^*$ be the polynomial $C.leaf.InvertLc[p]$.
Then, we have:
\begin{itemizeshort}
\item $p=p^*$ modulo $C$.
\item If $p^*$ is of level $n$, then $\init{p^*}$
is invertible modulo the path $C$.
\item If $p^*$ is of level less than $n$, then $p^*$ is 
either $0$ or $1$.
\end{itemizeshort}
}
\Begin{
    ${\sf IntersectPath}_{n-1}(\lc{\bar{p}, x_n}, \Gamma, T)$\;
    \While{$C :=  \NextPathToDo{\Gamma}\neq\emptyset$}{
        \uIf{$C.leaf.signs[\lc{\bar{p},x_n}]=1$}{
             \uIf{$\level{\bar{p}}<n$}{
                  $C.leaf.InvertLc[p] := 1$
             }
             \Else{
                  $C.leaf.InvertLc[p] := \bar{p}$
             }
        }
        \Else{
             \uIf{$\level{\bar{p}}<n$}{
                  $C.leaf.InvertLc[p] := 0$
             }
             \Else{
                  \RegularizeInitial{p, \tail{\bar{p}}, C, T}
             }
        }
    }
}
\end{algorithm}

\begin{Theorem}
For a set of polynomials in $\K[x_1,\ldots,x_n]$, 
Algorithm~\ref{Algo:CylindricalDecompose} computes 
an $F$-invariant cylindrical decomposition of $\KK^n$.
\end{Theorem}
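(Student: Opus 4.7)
The plan is to proceed by induction on the cardinality of $F$. The base case, when no polynomial has yet been processed, is vacuous since the one-branch tree built in the initialization loop of Algorithm~\ref{Algo:CylindricalDecompose} is a trivial cylindrical tree that is $\emptyset$-invariant. For the inductive step, suppose that after processing the first $i$ polynomials of $F$ the tree $T$ is a valid cylindrical tree in the sense of Section~\ref{sec:complex} and is invariant with respect to those polynomials. I would then show that, given the $(i{+}1)$-th polynomial $p$, the call ${\sf Intersect}_n(p, T)$ produces a new tree $T'$ that (i) is still a valid cylindrical tree, (ii) makes $p$ sign-invariant above every path, and (iii) preserves the sign-invariance of all previously handled polynomials. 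Point (iii) follows from the fact that every modification performed by the subroutines either attaches information to nodes or applies {\sf split}, which by construction refines rather than disturbs existing cells.

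The inductive step decomposes into verifying each subroutine. First, {\sf IntersectPath} either operates at level $n$ directly or projects the current path to level $k = \level{p}$, delegates to ${\sf IntersectMain}_k$, and then lifts back: the lift is correct because the defining constraint of a level-$n$ leaf is unaffected by a refinement below level $k$, so the sign of $p$ at the leaf equals that at its level-$k$ ancestor. Second, {\sf IntersectMain} is the heart of the algorithm; I would prove it by a two-level analysis, first invoking {\sf Squarefree} to obtain, above each refined branch $C$, a polynomial $sp$ congruent to $p$ modulo $C$ whose initial and discriminant are either invertible or trivially handled, and then performing a case discussion on the leaf's formula. The cases ``no constraint'', ``$f = 0$'', and ``$f \neq 0$'' are analyzed using the cofactors returned by {\sf CoFactor}; the key geometric point is that for every $\alpha$ in the zero set of $C$, the univariate GCD $g(\alpha, x_n)$ of $sp(\alpha, x_n)$ and $f(\alpha, x_n)$ in $\KK[x_n]$ partitions their common roots, so that the new children $V_1, V_2$ cover the parent cell disjointly and the level-$n$ polynomials defining them separate above $C$ in the sense of Section~\ref{sec:complex}.

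The main technical obstacle is the correctness of {\sf Gcd}, but this is precisely what Theorem~\ref{Theorem:gcd} delivers: for the smallest index $j$ such that $s_j$ is invertible modulo the current path and $s_0, \ldots, s_{j-1}$ vanish modulo it, $S_j$ is a GCD of $p$ and $f$ modulo that path. The recursive structure of Algorithm~\ref{Algo:Gcdi} therefore reduces to showing by induction on $j$ that (a) the invocation of ${\sf IntersectPath}_{n-1}(s_j, \Gamma, T)$ eventually refines $\Gamma$ into branches on which either $s_j$ vanishes (triggering the recursion on $j{+}1$) or $s_j$ is invertible (exposing Theorem~\ref{Theorem:gcd}), and (b) this lower-dimensional intersection call is well-founded, which grounds an outer induction on $n$. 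The same outer induction handles {\sf MakeLeadingCoefficientInvertible}: each recursive call replaces $\bar p$ by $\tail{\bar p}$, strictly decreasing the main degree, and again appeals to ${\sf IntersectPath}_{n-1}$.

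Termination of all loops driven by {\sf NextPathToDo} follows from the observation that each subroutine performs only finitely many splits (the relevant subresultant chain has at most $\min(\mdeg{p},\mdeg{f})+2$ significant entries, and the squarefree and cofactor computations are finite) and that {\sf NextPathToDo} never revisits an already-processed path. Combining the outer induction on $|F|$, the middle induction on $n$, and the inner inductions on the subresultant index and on polynomial degree, the three properties (i)--(iii) are established, proving the theorem.
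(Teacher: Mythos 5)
Your proposal is correct and follows essentially the same route as the paper's (much terser) proof sketch: termination is established by nested inductions tracking the mutual-recursion structure ${\sf IntersectMain}_n \to \{{\sf Squarefree}_n, {\sf Gcd}_n\} \to {\sf IntersectMain}_{n-1}$, and correctness is reduced to the subroutine specifications together with Theorem~\ref{Theorem:gcd}. You have simply made explicit the outer induction on $|F|$, the refinement-preserving property of {\sf split}, the projection/lift argument in {\sf IntersectPath}, and the well-foundedness of the subresultant-index and degree recursions, all of which the paper leaves implicit under the phrase ``easily proved by induction.''
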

\begin{proof}
Firstly, we prove the termination.
The basic mutual calling graph of its subroutines are:
$$
{\sf IntersectMain}_n\rightarrow{\sf Squarefree_n}
\rightarrow{\sf IntersectMain}_{n-1}\rightarrow\cdots,
$$
and 
$$
{\sf IntersectMain}_n\rightarrow{\sf Gcd_n}
\rightarrow{\sf IntersectMain}_{n-1}\rightarrow\cdots
$$
So the termination is easily proved by induction.
The correctness follows from the specification 
of its subroutines and Theorem~\ref{Theorem:gcd}.
\end{proof}

\begin{Example}
In this example, we illustrate the operation {\sf IntersectPath}. 
Let $F := \{y^2+x, y^2+y\}$. 
The incremental algorithm first computes an $y^2+x$
sign invariant complex cylindrical tree, 
which is described by the following tree $T$.
$$
T := \left\{
\begin{array}{ll}
{x = 0}  & \left\{
        \begin{array}{rcl}
         {y = 0}   &:& { y^2+x=0}\\
         y\neq 0 &:& { y^2+x\neq0}
        \end{array}
        \right.\\
&\\
x\neq 0& \left\{
        \begin{array}{rcl}
         y^2+x = 0   &:& { y^2+x=0} \\
         y^2+x\neq 0 &:& { y^2+x\neq0}
        \end{array}
        \right.\\ 
\end{array}
\right.
$$
Let $\Gamma$ be the path $\{x=0, y\neq 0\}$ of $T$. 
Calling $\IntersectPath{y^2+y, \Gamma, T}$
will update $T$ into the following tree.

$$
\left\{
\begin{array}{ll}
{ x = 0}  & \left\{
        \begin{array}{rcl}
         {y = 0}   &:& { y^2+x=0}\\
         { y = -1}  &:& { y^2+x\neq0\wedge y^2+y=0}\\
         { \rm otherwise} &:& { y^2+x\neq0\wedge y^2+y\neq 0}\\
        \end{array}
        \right.\\
&\\
x\neq 0& \left\{
        \begin{array}{rcl}
         y^2+x = 0   &:& { y^2+x=0} \\
         y^2+x\neq 0 &:& { y^2+x\neq0}
        \end{array}
        \right.\\ 
\end{array}
\right.
$$

\end{Example}

\section{Building a CAD tree from a complex cylindrical tree}
\label{sec:cad}
In this section, we review briefly how to compute a
CAD of $\R^n$ from a cylindrical decomposition of $\C^n$.
The reader may refer to~\cite{CMXY09} for more details.
Recall that $n \geq 1$ holds. We denote by ${\pi}_{n-1}$  the
standard projection from ${\R}^n$ to ${\R}^{n-1}$ that
maps $(x_1, \ldots, x_{n-1}, x_n)$ onto $(x_1, \ldots, x_{n-1})$.

\smallskip\noindent{\small \bf Stack over a connected semi-algebraic set.}
Let $S$ be a connected semi-algebraic subset of $\R^{n-1}$. 
The {\em cylinder} over $S$ in $\R^n$ is defined as $Z_{\R}(S) := S\times\R$.
Let $\theta_1<\cdots<\theta_s$ be continuous semi-algebraic functions 
defined on $S$. The intersection of the graph of $\theta_i$ with $Z_{\R}(S)$
is called the {\em $\theta_i$-section} of $Z_{\R}(S)$.
The set of points between two consecutive sections of $Z_{\R}(S)$
is a connected semi-algebraic subset of $\R^n$, 
called a {\em sector} of $Z_{\R}(S)$.
All the sections and sectors of $Z_{\R}(S)$ form a disjoint 
decomposition of $Z_{\R}(S)$, called a {\em stack} over $S$.

\smallskip\noindent{\small \bf Cylindrical algebraic decomposition.}
A finite partition ${\DD}$ of $\R^n$ is 
called a {\em cylindrical algebraic decomposition} (CAD) of $\R^n$ if 
one of the following properties holds.
\begin{itemizeshort}
\item Either $n=1$ and ${\DD}$ is a stack over $\R^0$.
\item Or the set of $\{ {\pi}_{n-1} (D) \ | \ D \in {\DD} \}$
      is  a CAD of $\R^{n-1}$ and each $\ D \in {\DD}$ is a section 
      or sector of the stack over ${\pi}_{n-1} (D)$.
\end{itemizeshort}  
When this holds, the  elements of ${\DD}$ are called {\em cells}.

\smallskip\noindent{\small \bf Sign invariance and delineability.}
Let $p$ be a polynomial of $\R[x_1,\ldots,x_n]$, and let $S$ be a subset of $\R^n$.
The polynomial $p$ is  called {\em sign invariant} on  $S$
if the sign of $p(\alpha)$ does not change when $\alpha$ ranges over $S$. 
Let $F\subset\R[x_1,\ldots,x_n]$ be
a finite polynomial set. We say $S$ is $F$-invariant if each $p\in F$ is
invariant on $S$. A cylindrical algebraic decomposition ${\DD}$ is $F$-invariant if 
$F$ is invariant on each cell $D \in {\DD}$.
Let $p$ be a polynomial of $\R[x_1,\ldots,x_n]$, 
and let $S$ be a connected semi-algebraic set of $\R^{n-1}$. We say that $p$ 
is {\em delineable} on $S$ if the real zeros of $p$
define continuous semi-algebraic functions 
$\theta_1,\ldots,\theta_s$ such that, for all $\alpha\in S$ we have 
$\theta_1(\alpha)<\cdots<\theta_s(\alpha)$.
In other words, $p$  is delineable on $S$ if its real zeros
naturally determine a stack over $S$. 
We recall the following Theorem introduced in~\cite{CMXY09}.
\begin{Theorem}
\label{Theorem:stack}
Let $P=\{p_1,\ldots,p_r\}$ be a finite set of polynomials 
in $\R[x_1<\cdots<x_n]$ of level $n$. 
Let $S$ be a connected semi-algebraic subset of $\R^{n-1}$. 
If $P$ {\em separates} above $S$, 
then each $p_i$ is delineable on $S$.
Moreover, the product of the $p_1,\ldots,p_r$ is also delineable on $S$.
\end{Theorem}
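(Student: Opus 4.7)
The plan is to reduce delineability to the non-vanishing of two classical objects: the initial and the discriminant of each $p_i$ with respect to $x_n$. First I would observe that the separation hypothesis directly gives $\init{p_i}(\alpha) \neq 0$ for every $\alpha \in S$, and that since $p_i(\alpha, x_n) \in \C[x_n]$ is squarefree with non-vanishing leading coefficient, the discriminant $\discrim{p_i}$ also does not vanish at any $\alpha \in S$.

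On the connected set $S$ I would then combine the continuity of the complex roots of a univariate polynomial as a function of its coefficients with two facts: the non-vanishing of $\init{p_i}$ prevents the degree of $p_i(\alpha, x_n)$ from dropping, and the non-vanishing of $\discrim{p_i}$ prevents distinct roots from colliding. Hence the number of distinct complex roots of $p_i(\alpha, x_n)$ is constantly equal to $\mdeg{p_i}$ on $S$. Moreover, a real root can only leave the real line by colliding with its complex conjugate through a double real root, and since the discriminant does not vanish, this cannot occur. Combined with the connectedness of $S$, the number of real roots of $p_i(\alpha, x_n)$ is therefore constant on $S$, say $s_i$, and the continuity of roots lets us label them as continuous functions $\theta_{i,1} < \cdots < \theta_{i,s_i}$ on $S$. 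Each $\theta_{i,j}$ is semi-algebraic by the Tarski--Seidenberg principle, since its graph is defined by a quantifier-free formula stating that $\theta_{i,j}(\alpha)$ is the $j$-th smallest real root of $p_i(\alpha, x_n)$. This gives delineability of $p_i$.

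For the product $p := p_1 \cdots p_r$, I would invoke the classical identity $\discrim{f g} = \discrim{f}\,\discrim{g}\,\res{f, g}^2$ applied iteratively. The initial of $p$ is $\prod_i \init{p_i}$, which is non-vanishing on $S$; the coprimality half of the separation hypothesis asserts that $\res{p_i, p_j}(\alpha) \neq 0$ for every $\alpha \in S$, so $\discrim{p}$ is also non-vanishing on $S$. Applying the argument of the previous paragraph to $p$ in place of $p_i$ yields delineability of $p$. Equivalently, the real roots of $p(\alpha, x_n)$ are the disjoint union of the real roots of the $p_i(\alpha, x_n)$, and two such roots coming from different factors cannot collide as $\alpha$ varies without creating a common complex root, which is forbidden by the resultant condition. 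Hence the merged family of $s_1 + \cdots + s_r$ continuous semi-algebraic functions remains strictly ordered on $S$.

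The main obstacle is the continuity of roots and the constancy of $s_i$ on $S$. The standard argument relies on Rouch\'e's theorem (or on the implicit function theorem, which is available because the derivative of $p_i$ with respect to $x_n$ does not vanish at any root of $p_i(\alpha,x_n)$, by the discriminant condition) to track individual complex roots along paths in $S$, combined with the invariance of the real/complex type of each root under small perturbations that do not produce a double root. Everything else---the extraction of the non-vanishing conditions on $\init{p_i}$ and $\discrim{p_i}$ from separation, the discriminant formula for products, and the semi-algebraic nature of the ordered root functions---follows in a routine fashion.
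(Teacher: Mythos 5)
The paper does not actually prove this statement: it simply recalls the theorem from the cited reference~\cite{CMXY09}, so there is no in-text proof to compare against.

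That said, your argument is correct and is the classical delineability argument that underlies CAD (essentially Collins's original lemma, with the resultant/discriminant criterion). You correctly extract the two analytic conditions from the separation hypothesis: the non-vanishing of \init{p_i} on $S$ (directly stated) and the non-vanishing of \discrim{p_i} on $S$ (since \init{p_i} is non-zero at $\alpha$, the specialization of \discrim{p_i} at $\alpha$ differs from the discriminant of $p_i(\alpha,x_n)$ by a non-zero factor, and the latter is non-zero because $p_i(\alpha,x_n)$ is squarefree). The root-tracking argument via the implicit function theorem at each simple root, the conjugate-pair observation showing that a real root can only become non-real through a collision with its conjugate, and the use of connectedness to globalize the local constancy of the real-root count are all standard and correct. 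For the product, the identity $\discrim{fg} = \discrim{f}\,\discrim{g}\,\res{f,g}^2$ together with the coprimality half of the separation hypothesis indeed reduces the problem to the single-polynomial case; your alternative phrasing in terms of ``roots from different factors cannot collide without violating the resultant condition'' is also fine. One minor imprecision: the formula describing ``$y$ is the $j$-th smallest real root of $p_i(\alpha,x_n)$'' naturally contains quantifiers; Tarski--Seidenberg then gives an \emph{equivalent} quantifier-free formula, which is what establishes semi-algebraicity of the graph of $\theta_{i,j}$. This does not affect the correctness of the conclusion.
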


Let $F$ be a finite set of polynomials in $\Q[x_1<\cdots<x_n]$. 
Let $CT$ be an $F$-invariant complete cylindrical tree of $\C^n$.
Applying Theorem~\ref{Theorem:stack} to polynomials in $CT$, we can derive 
an $F$-invariant cylindrical algebraic decomposition of  $\R^n$ by induction on $n$.
A procedure {\sf MakeSemiAlgebraic}, was introduced in~\cite{CMXY09}
to derive a CAD from a $CT$ via real root isolation of zero-dimensional  regular chains.
\smallskip

\begin{Example}
Let $F := \{y^2+x\}$. 
An $F$-invariant cylindrical algebraic decomposition 
is described by the following tree.
$$
T := \left\{
\begin{array}{ll}
x < 0  & \left\{
        \begin{array}{lcl}
         y < -\sqrt{|x|}   &:& { y^2+x>0}\\
         y = -\sqrt{|x|}   &:& { y^2+x=0}\\
         y > -\sqrt{|x|}\wedge y<\sqrt{|x|}   &:& { y^2+x<0}\\
         y = \sqrt{|x|}   &:& { y^2+x=0}\\
         y > \sqrt{|x|}   &:& { y^2+x>0}
        \end{array}
        \right.\\
&\\
x = 0  & \left\{
        \begin{array}{rcl}
         y <0    &:& { y^2+x>0}\\
         y = 0   &:& { y^2+x=0}\\
         y > 0 &:& { y^2+x>0}
        \end{array}
        \right.\\
&\\
x> 0&   \;{\rm for~ any~}y\,  :\;\, { y^2+x > 0}
\end{array}
\right.
$$
\end{Example}

\section{Making use of equational constraints and other optimizations}
\label{sec:equation}
In this section, we discuss several possible optimizations 
to algorithms presented in Section~\ref{sec:incremental}. 

Firstly, we discuss how to compute a CAD dedicated to a semi-algebraic system, 
which provides a systematic solution for making use of equational constraints when computing CADs.
The motivation for making use of equational constraints
comes from quantifier elimination. 
Let $$PF := (Q_{k+1}x_{k+1}\cdots Q_nx_n)FF(x_1,\ldots,x_n),$$ be 
a prenex formula, where $FF$ is a DNF formula. 
To perform {\QE} by {\CAD},
 the first computation step is to collect all the polynomials 
appearing in $FF$ as a polynomial set $F$ and compute an $F$-invariant CAD of $\R^n$.
This process of computing an $F$-invariant CAD  exhausts all possible sign combinations of $F$, 
including those which do not appear in $FF$, 
and thus often computes much more than needed for solving the input QE problem.
Different techniques in the literature have been proposed for taking advantage of the structure of the input problem.
These methods include partial CAD~\cite{ch91} for lazy lifting, 
simplified projection operator for handling pure strict inequalities~\cite{scott93,adam00}, 
smaller projection sets for making use of equational constraints~\cite{Collins98, McCallum2001, Brown05, McCallum2009}.

To make the discussion clear, we first quote a paragraph of~\cite{Brown05}.
``The idea is as follows: if an input formula includes the constraint $f=0$, 
then decompose $\R^r$ into regions in which $f$ has invariant sign, 
and then refine the decomposition so that the other polynomials 
have invariant sign in those cells in which $f=0$. 
The signs of the other polynomials in cells in which $f\neq0$ are, 
after all, irrelevant. 
Additionally, the method of equational constraints seeks 
to deduce and use constraints that are not explicit in the input 
formula, but rather arise as consequences of two or more explicit
constraints (e.g. if $f=0$ and $g=0$ are explicit constraints, 
then $\res{f,g}=0$ is also a constraint.)''

This idea, of course, is attractive.
Much progress on it has also been made. 
However, the reason why it is a generally hard problem 
for CAD is that the framework of {\CCAD} does not have 
much flexibility to allow propagation of equational constraints. 
In the world of {\CCAD}, one always tries to obtain a generic 
projection operator and then applies the same projection operator 
recursively. To obtain a generic projection operator 
for handling equational constraints is hard
because many problems inherently require different projection operators
during projection. 
Therefore case discussion is important.

In fact, case discussion is very common in algorithms for computing 
triangular decompositions.
For such algorithms, equational constraints are natural input of these algorithms. 
The two keys ideas ``splitting only above $f=0$'' and  ``if $f=0$ and $g=0$ are explicit constraints, 
then $\res{f,g}=0$ is also a constraint'' have already been systematically 
taken care of in the {\sf Intersect} operation of the authors' paper
for  computing triangular decompositions~\cite{CM11}.

Next we explain how to modify algorithms presented in Section~\ref{sec:incremental} to 
automatically implement these ideas.

Suppose now that the input of Algorithm {\sf CylindricalDecompose} is a 
system of equations or inequations, 
this algorithm will then compute a partial cylindrical tree 
such that its zero set is exactly the zero set of input system.
This can be simply achieved by passing an equation or inequation 
to the function {\sf Intersect}.
W.l.o.g., let us assume that an equation $p=0$ is passed as 
an argument of {\sf Intersect}.
Then for this function and all its called subroutines, 
we will cut the computation branches above which $p$ is known to be nonzero 
and never proceed with computation branches above which $p$ cannot be zero.
For example, we will not create a new vertex at step $15, 32, 42$ 
in Algorithm {\sf IntersectMain}.
We will delete the vertex $V$ at step $11$, $26$, $37$ since $p$ is nonzero on $V$.

The first important optimization in {\sf IntersectMain} which can be implemented 
is to avoid {\sf Squarefree} computation at step $3$ if $\Gamma.leaf$
is an equational constraint.
This idea is quite close to ``splitting only above $f=0$''.
Another important optimization can be done 
at step $19$ of {\sf IntersectMain}. 
Assume that $V.formula$ is an equational constraint $f=0$, 
then when {\sf Gcd} is called, 
in step $5$ of Algorithm~\ref{Algo:Gcdi},
we can do as follows. 
If $i=0$, then $s_i$ is the resultant of $p$ and $f$.
Thus we should pass $s_i=0$ to the {\sf IntersectPath} operation
in order to avoid useless computations 
on the branch $s_i\neq 0$.
This addresses the idea ``if $f=0$ and $g=0$ are explicit constraints, 
then $\res{f,g}=0$ is also a constraint.''
Moreover, these optimizations are systematically performed during the whole 
computation.

Next we briefly mention several other important optimizations.
Let $V$ be a leaf of a path $\Gamma$ of a cylindrical tree. 
Assume that $V.formula$ is of the form $f\neq 0$ or of  the form  $f=0$.
We can safely replace $f$ by its primitive part since $\lc{f}$
is invertible modulo $\Gamma_{n-1}$. 
Replacing $f$ by its irreducible factors over $\Q$ is often a 
more efficient choice.
Last but not least,
recall that a path $\Gamma$ in the cylindrical tree is a simple system.
Writing $\Gamma$ as two parts $\Gamma := [T, H]$, where $T$ is 
a set of equations and $H$ is a set of inequations. 
We know that $T$ is a regular chain and $\Gamma$ is a squarefree regular system. 
Thus the Zariski closure of $\Gamma$ is the variety of the saturated ideal of $T$.
We can call the pseudo division operation $\prem{p, T}$ or $\prem{f, T}$
to test whether $p$ or $f$ is zero modulo $\Gamma$. 
And sometimes replacing $p$ by $\prem{p, T}$ and $f$ by $\prem{f, T}$
also ease the computations.

\begin{Example}
Let $F:= \{y^2+x=0, y^2+y=0\}$ be a system of equations. 
Taking $F$ as input, Algorithm {\sf CylindricalDecompose} generates the 
following partial cylindrical tree $T$ of $\C^2$ such that the zero set of $F$
is exactly the union of the zero sets of the paths in $T$.
\begin{figure}[htbp]
\begin{center}
\scalebox{0.4}{\input{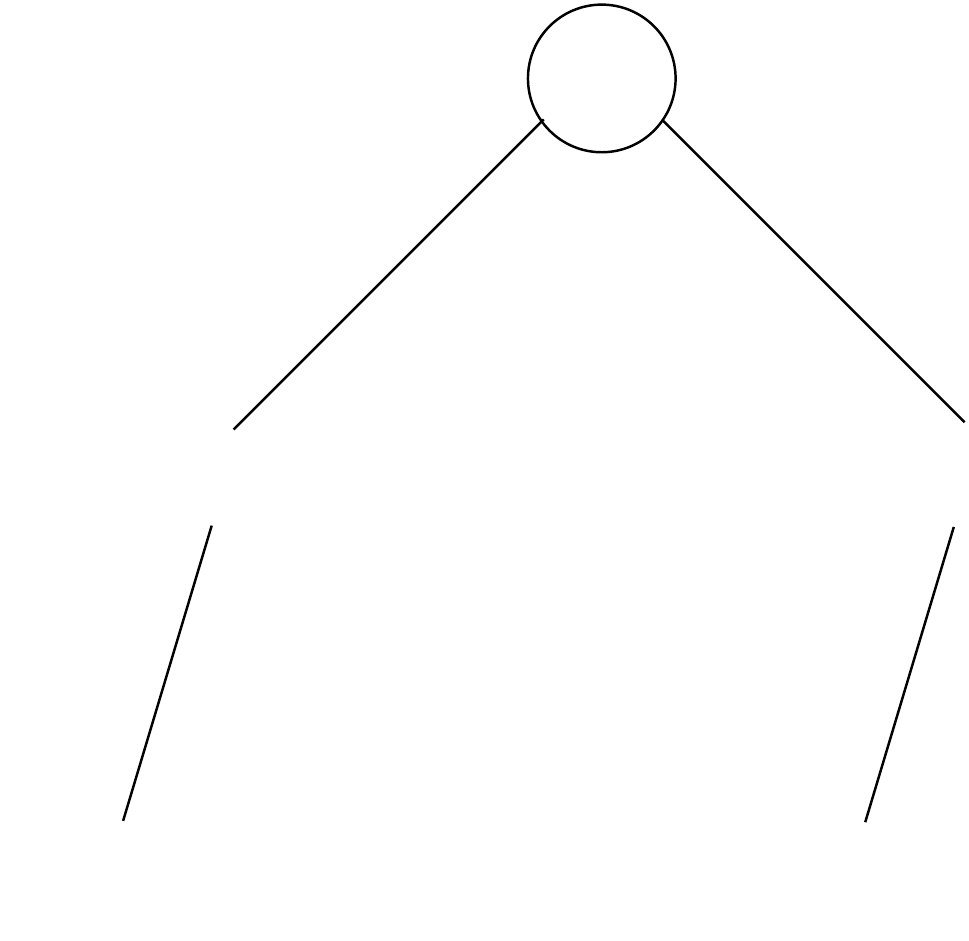_t} }
\caption{A partial cylindrical tree $T$ adapted to $F$}
\label{fig:tree-eqs}
\end{center}
\end{figure}
\end{Example}

\section{Benchmark}
\label{sec:benchmark}
In this section, we report on the experimental results 
of a preliminary implementation 
in the {\RegularChains} 
library of {\Maple} of the algorithms of Sections~\ref{sec:incremental}
and \ref{sec:cad}.

The examples in Table~\ref{table:cd} and Table~\ref{table:cad}
are from papers on polynomial system solving,
such as~\cite{CGLMP07,BoulierChenLemaireMorenoMaza09}
and the references therein.
All the tests were launched
on a machine with Intel Core 2 Quad {\small CPU} (2.40{\small GHz}) 
and 8.0{\small Gb} total memory.
The time-out is set as $1$ hour.
In the tables, the symbol $> 1h$ means time-out.

The {\Maple} functions are launched in {\Maple}~15
with the latest {\RegularChains} library.
The memory usage is limited to $60\%$ of total memory.
The software {\QEPCAD} is launched with the option $+N500000000 +L200000$, 
where the first option specifies the memory to be pre-allocated 
(about $23\%$ of total memory for our machine)
and the second option specifies the number of prime numbers to be used.

\begin{table}
\centering
\caption{Timings for computing cylindrical decomposition of the complex space}
\label{table:cd}
\begin{tabular}{|c|c|c|c||c|c|c|c|}\hline
System             & tcd-rec & tcd-inc & tcd-eqs & System & tcd-rec & tcd-inc & tcd-eqs \\ \hline
AlkashiSinus & 	  3373.966 & 	  14.568 & 	  4.168 &MontesS10 &  	 $> 1h$ &  	 $> 1h$ & 	  2.952 \\
Alonso & 	  9.636 & 	  1.404 & 	  0.700 &MontesS12 &  	 $> 1h$ &  	 $> 1h$ & 	  7.528 \\
Arnborg-Lazard-rev & 	  2759.940 & 	  2419.543 & 	  16.233 &MontesS15 &  	 $> 1h$ &  	 $> 1h$ & 	  77.048 \\
Barry & 	  39.346 & 	  1.808 & 	  0.556 &MontesS16 &  	 $> 1h$ &  	 $> 1h$ & 	  8.228 \\
blood-coagulation-2 & 	  235.310 & 	  9.472 & 	  0.808 &MontesS4 & 	  556.390 & 	  102.122 & 	  0.488 \\
Bronstein-Wang & 	  255.427 & 	  35.990 & 	  1.120 &MontesS5 & 	  1449.810 & 	  119.059 & 	  1.004 \\
cdc2-cyclin &  	 $> 1h$ & 	  68.920 & 	  65.976 &MontesS7 &  	 $> 1h$ &  	 $> 1h$ & 	  1.060 \\
circles & 	  276.389 & 	  2.280 & 	  0.520 &MontesS9 & 	  269.636 & 	  4.212 & 	  0.980 \\
genLinSyst-3-2 & 	  916.245 & 	  19.537 & 	  1.384 &nql-5-4 &  	 $> 1h$ & 	  1.056 & 	  0.528 \\
genLinSyst-3-3 &  	 $> 1h$ & 	  160.406 & 	  12.408 &r-5 & 	  68.364 & 	  3.232 & 	  0.876 \\
Gerdt &  	 $> 1h$ &  	 $> 1h$ & 	  1.188 &r-6 & 	  1456.883 & 	  46.458 & 	  1.200 \\
GonzalezGonzalez & 	  141.072 & 	  53.451 & 	  0.732 &Raksanyi & 	  1471.351 & 	  118.227 & 	  1.000 \\
hereman-2 &  	 $> 1h$ & 	  40.042 & 	  0.908 &Rose &  	 $> 1h$ & 	  51.855 & 	  1.072 \\
lhlp5 & 	  31.069 & 	  3.984 & 	  0.648 &Wang93 &  	 $> 1h$ &  	 $> 1h$ & 	  18.877 \\
Maclane &  	 $> 1h$ &  	 $> 1h$ & 	  6.420 &YangBaxterRosso & 	  54.895 & 	  1.560 & 	  0.844 \\
\hline
\end{tabular}
\end{table}

\begin{table}
\centering
\caption{Timings for computing {\CAD}}
\label{table:cad}
\begin{tabular}{|c|c|c|c|c|c|}\hline
System & qepcad & qepcad-eqs & mathematica-eqs & tcad & tcad-eqs\\ \hline
Alonso & 7.516 & 5.284 & 0.74& 	  61.591 & 	  5.776 \\
Arnborg-Lazard-rev &$> 1h$ &$> 1h$ & 0.952&  	 $> 1h$ & 	  17.325 \\
Barry & Fail& 216.425 & 0.032& 	  8.580 & 	  1.004 \\
blood-coagulation-2 & $> 1h$ & $> 1h$ & $> 1h$& 	  985.709 & 	  7.260 \\
Bronstein-Wang & $> 1h$ & $> 1h$ & 26.726& 	  333.892 & 	  2.564 \\
cdc2-cyclin & $> 1h$ & $> 1h$ & 0.208& 	  574.127 & 	  503.863 \\
circles & 21.633 & 5.996 & 41.211&  	 $> 1h$ & 	  40.902 \\
GonzalezGonzalez & 10.528 & 10.412 & 0.012& 	  214.213 & 	  1.136 \\
lhlp2 & 960.756 & 5.076 & 0.016& 	  3.124 & 	  0.952 \\
lhlp5 & 10.300 & 10.068 & 0.016& 	  35.338 & 	  1.084 \\
MontesS4 & $> 1h$  & $> 1h$ & 0.004& 	  2682.391 & 	  0.888 \\
MontesS5 & Fail & Fail & $> 1h$&  	 $> 1h$ & 	  9.400 \\
nql-5-4 & 93.073 & 5.420 & 1303.07& 	  113.675 & 	  1.004 \\
r-5 &$> 1h$ & 1802.676 & 0.016& 	  1282.928 & 	  1.208 \\
r-6 &$> 1h$ & $> 1h$ & 0.024&  	 $> 1h$ & 	  1.500 \\
Rose & Fail& $> 1h$  & $> 1h$& 	  606.361 & 	  3.136 \\
AlkashiSinus &      $> 1h$               &      $> 1h$               & 2.232&  	 $> 1h$ & 	  58.775 \\
genLinSyst-3-2 &   Fail     &   Fail    & 217.062& 	  3013.764 & 	  6.588 \\
MontesS10 &  $> 1h$  &  $> 1h$  & $> 1h$&  	 $> 1h$ & 	  22.797 \\
MontesS12 &  $> 1h$ & $> 1h$  & $> 1h$&  	 $> 1h$ & 	  330.996 \\
MontesS15 &$> 1h$ &$> 1h$ & 0.004&  	 $> 1h$ & 	  395.964 \\
MontesS7 & $> 1h$ &$> 1h$ & 245.807&  	 $> 1h$ & 	  2.452 \\
MontesS9 &Fail & Fail & $> 1h$& 	  110.902 & 	  4.944 \\
Wang93 & Fail& Fail& $> 1h$&  	 $> 1h$ & 	  152.673 \\

\hline
\end{tabular}
\end{table}

In Table~\ref{table:cd}, we report on timings 
for computing cylindrical decomposition of the complex space
with different algorithms and options.
Each input system is a set of polynomials.
The notation tcd-rec denotes an 
implementation of the original recursive algorithm in~\cite{CMXY09},
while the notation tcd-inc denotes the incremental algorithm
presented in Section~\ref{sec:incremental}.
Both  tcd-rec and tcd-inc take a set of polynomials as input. 
The notation tcd-eqs refers to an optimized version of 
tcd-inc which makes use of equational constraints, as 
explained in Section~\ref{sec:equation}.
With the implementation tcd-eqs, every input polynomial set
is regarded as a set of equations (equating each input polynomial
to zero).
As we can see in Table~\ref{table:cd}, the incremental algorithm presented 
in this paper is much more efficient than the original recursive algorithm. 
The timings of tcd-eqs show that the optimizations
presented in Section~\ref{sec:equation}
for making use of equational constraints are very effective.

In Table~\ref{table:cad}, we report on timings 
for computing {\CAD} with three different computer algebra
packages: {\QEPCAD}, the {\sf CylindricalDecomposition} command 
of {\sf Mathematica} and the algorithm
presented in Section~\ref{sec:incremental}.
Each system is a set of polynomials.
Two categories of experimentation are conducted.
The first category is concerned with the timings 
for computing a full {\CAD} of a set of polynomials.
For {\sf Mathematica}, we cannot find
any options of {\sf CylindricalDecomposition} 
for computing a full {\CAD} of a set of 
polynomials.
Therefore for this category, 
only the timings of {\QEPCAD} and {\TCAD} are reported.
The second category is concerned with 
the timings for computing a {\CAD} of a variety.
For this category, the timings for {\QEPCAD}, {\sf Mathematica}
and {\sf TCAD} are all reported.

The notation qepcad denotes computations 
that {\QEPCAD} performs by (1) treating each input system as a set of 
non-strict inequalities and, (2) treating all variables as free 
variables and, (3) executing with the ``full-cad'' option.
The notation tcad corresponds to computations that {\TCAD}
performs by (1) treating each input system as a set of 
non-strict inequalities and,
(2) computing a sign invariant full {\CAD} of 
polynomials in the input system and, 
(3) selecting the cells which satisfy those non-strict inequalities. 
In this way, both qepcad and {\TCAD} compute a full {\CAD} of a set of 
polynomials.

The notation qepcad-eqs denotes the computations 
that {\QEPCAD} performs by (1) treating each input system as a set of 
equations and, (2) treating all variables as free 
variables and, (3) executing with the default option.
The notation mathematica-eqs represents
computations where the {\sf CylindricalDecomposition} command 
of {\sf Mathematica} treats each input system as a set of equations.
The notation tcad-eqs corresponds to  computations where {\TCAD}
treats each input system as a set of equations.

From Table~\ref{table:cad}, we make the following observations.
When full {\CAD}s are computed, within one hour time limit, 
{\QEPCAD} only succeeds on $6$ out of $24$ examples while
{\TCAD} succeeds on $14$ out of $24$ examples. 
When {\CAD}s of varieties are computed, 
for all the $10$ out of $24$ examples that {\QEPCAD}
can solve within one hour time limit, 
both {\sf Mathematica} and {\sf TCAD}
succeed with usually less time. 
For the rest $14$ examples, {\sf TCAD}
solves all of them while {\sf Mathematica} 
only succeeds on $7$ of them.

\section{Conclusion}
In this paper, we present an incremental algorithm for computing CADs.
A key part of the algorithm is an {\sf Intersect} 
operation for refining a given complex cylindrical tree. 
If this operation is supplied with an equational constraint,
it only computes a partial cylindrical tree, which 
provides an automatic solution for propagating equational constraints.
We have implemented our algorithm in {\sc Maple}. 
The experimentation 
shows that the new algorithm is much more efficient than
our previous recursive algorithm. 
We also compared our implementation with the software packages 
{\QEPCAD} and {\sf Mathematica}. 
For many examples, our implementation outperforms the other two. 
This incremental algorithm can support quantifier elimination. 
We will present this work in a future paper.

\section*{Acknowledgments} 
The authors would like to thank the
readers who helped improve the earlier versions 
of this paper.
This research was supported by 
Academic Development Fund ADF-Major-27145
of The University of Western Ontario.

\bibliographystyle{plain}

\end{document}